\documentclass[ejsv2,preprint]{imsart}

\RequirePackage[numbers]{natbib}
\usepackage{titlesec}
\usepackage{orcidlink}
\usepackage{comment}
\newtheorem{prop}{Proposition}[section]
\newtheorem*{prop*}{Proposition}
\newtheorem{corollary}{Corollary}[section]
\newcommand{\Expect}[1]{\mathbb{E}\left[ #1 \right]}

\newcommand{\diffd}{\textnormal{d}}

\newcommand{\pderiv}[2]{\frac{\partial #1}{\partial #2}}

\newcommand{\abs}[1]{\left| #1 \right|}

\newcommand{\R}{\mathbb{R}}
\newcommand{\intR}{\int_{\R}}

\newcommand{\gbp}{generalized Beta-prime }

\newcommand{\tfo}[1]{{}_2F_1\left(#1\right)}
\begin{document}
%\linenumbers
\begin{frontmatter}
\title{The Continuous Rank Probability Score of a Generalized Beta-Prime Distribution and Some Special Cases}
%\title{A sample article title with some additional note\thanksref{t1}}
\runtitle{CRPS of a Generalized Beta-Prime}
%\thankstext{T1}{A sample additional note to the title.}

\begin{aug}
%%%%%%%%%%%%%%%%%%%%%%%%%%%%%%%%%%%%%%%%%%%%%%%
%% Only one address is permitted per author. %%
%% Only division, organization and e-mail is %%
%% included in the address.                  %%
%% Additional information can be included in %%
%% the Acknowledgments section if necessary. %%
%% ORCID can be inserted by command:         %%
%% \orcid{0000-0000-0000-0000}               %%
%%%%%%%%%%%%%%%%%%%%%%%%%%%%%%%%%%%%%%%%%%%%%%%
\author[A]{\fnms{Matthew}~\snm{LeDuc}\ead[label=e1]{matthew.leduc@colorado.edu}},
%\author[B]{\fnms{Keanan}~\snm{Gleason}\ead[label=e2]{keanan.gleason@colorado.edu}\orcid{0000-0000-0000-0000}}
%\and
%\author[B]{\fnms{Third}~\snm{Author}\ead[label=e3]{third@somewhere.com}}
%%%%%%%%%%%%%%%%%%%%%%%%%%%%%%%%%%%%%%%%%%%%%%
%% Addresses                                %%
%%%%%%%%%%%%%%%%%%%%%%%%%%%%%%%%%%%%%%%%%%%%%%
\address[A]{Department of Applied Mathematics,
University of Colorado Boulder\printead[presep={,\ }]{e1}}

%\address[B]{Department of Economics,
%University of Colorado Boulder\printead[presep={,\ }]{e2,e3}}
\runauthor{M. LeDuc}
\end{aug}

\begin{abstract}
This working paper describes new results in derivations of the Continuous Ranked Probability Score of a generalized beta-
prime distribution and several special cases, such as the Dagum distribution 
and Singh-Maddala distribution. Comparison with Monte Carlo estimates is also presented.
\end{abstract}

\begin{keyword}[class=MSC]
\kwd[Primary ]{62F15}
%\kwd{62F15}
\kwd[; secondary ]{62P20}
\end{keyword}

\begin{keyword}
\kwd{Proper scoring rule}
\kwd{Generalized Beta-Prime}
\end{keyword}

\end{frontmatter}

\section{Introduction}

Due to the significant increase in computational power over recent decades, and the subsequent rise in the popularity of Bayesian methods, there have been many discussions about the proper way to evaluate the performance of probabilistic models.
One of the more commonly used proper scoring rules is the Continuous Rank Probability Score (CRPS,\cite{MR4832261,scoring}), which is a strictly proper scoring rule given by
\begin{equation}\label{eq:crps_integraldef}
    CRPS(F|y) = \int_{\mathbb{R}}\left(F(x)-H(x-y)\right)^2\text{d}x
\end{equation}
where $F$ is the predictive cumulative density function, $y$ the true value, and $H$ the Heaviside step function. One attractive feature of the rule is that the CRPS has the same units as the quantity of interest, allowing easy interpretation. Another is that it generalizes the absolute error: if the forecast is a deterministic value $\hat{x}$, then $F(x)=H(x-\hat{x})$ and then $CRPS(F|y) = \abs{\hat{x}-y}$. This allows the CRPS to be used to compare probabilistic models with deterministic ones \cite{scoring, Mardani2025}. As a testament to this, the CRPS has been widely used, especially in atmospheric science and weather prediction (e.g. \cite{ahmed2024evaluation,brocker2012evaluating,Mardani2025,Price2025,crps_taillardat,zamo2018estimation}).

While the CRPS is easy to interpret due to its relationship with the absolute error in estimation, the integral in Eq. \eqref{eq:crps_integraldef} can be potentially difficult to calculate, especially for poorly behaved distributions. To this end, there are several closed forms for the CRPS for parametric distributions. Tables of some of them are given in  \cite{crps_taillardat} and \cite{zamo2018estimation}, and the \textit{R} package \textit{scoringRules} \cite{scoringRules_package} has numerical implementations. 

One parametric distribution for which there is no such form is the generalized Beta-prime distribution, also known as the generalized Beta of the second kind \cite{betaprimemcdonald, generalizedbeta}. This distribution, whose probability density function is given by
\begin{equation}
    \label{eq:genbp_pdf}
    p_Z(z) = \frac{p}{qB(\alpha, \beta)} \frac{(z/q)^{\alpha p-1}}{(1+(z/q)^p)^{\alpha+\beta}}
\end{equation}
arises from the ratio of two generalized Gamma variables \cite{malik1967exact} or from suitable transformations of the Beta distribution \cite{betaprimeregression}. 
This distribution has applications in economics, where it and its special cases often serve as distributions of income (e.g. \cite{dagum2008new,mahler2025parsimonious,singh2008function,burrdist}, and remote sensing, where it can be used as a model for distributions of ratios of photon intensities \cite{hardnessratio,mine_permproc,Park_2006,genbp_sar}. It has also seen use in probabilistic numerics \cite{chaskalovic2022gbp_finiteelements}.
Special cases of the distribution include the log-logistic distribution ($\alpha=\beta=1$,\cite{akhtar2014log}), the Singh-Maddala distribution ($\alpha=1$,\cite{kumar2017singh,singh2008function}), the Dagum distribution ($\beta=1,$ \cite{dagum2008new,kleiber2008guide}), and the generalized F distribution ( $p=1$, \cite{pham1989generalized}, of which the $F$ distribution corresponds to $q=\beta/\alpha$ ). Among these distributions, only the log-logistic distribution has a known closed form CRPS \cite{scoringRules_package,crps_taillardat}. In this work, we will derive the CRPS of the generalized Beta-prime distribution, and through it derive a CRPS for the Singh-Madalla and Dagum distributions. We will also demonstrate that this reduces to the log-logistic CRPS found in \cite{crps_taillardat} under the appropriate restrictions. % as well as an expression for the case where $p=1/N$ for $N\in \mathbb{N}$, which contains the generalized F distribution when $p=1$. %These results will be compiled into a table at the end of the manuscript (Table \textcolor{red}{add the table}), and the appendices will contain ways to derive the Dagum and Singh-Madalla CRPS directly. 

\section{The CRPS of a Generalized Beta-Prime Distribution}

Rather than using the integral formulation of the CRPS given by Eq. \eqref{eq:crps_integraldef}, closed form expressions of the CRPS are typically derived by considering an alternate form of the scoring rule. The following proof will rely on being able to write the CRPS as \cite{scoring}

\begin{equation}\label{eq:crps_alternate}
        CRPS(F|y) = \Expect{\abs{X-y}} - \Expect{X\left(2F(X)-1\right)}
    \end{equation}
which converts the integral form of the rule, which relies on evaluating the CDF of the predictive distribution, into one involving expectations that are relatively easier to manage with sufficient manipulation. Throughout, we will use the fact that $_2F_1(a,b;c;t)= {}_2F_1(b,a;c;t)$. Similar properties hold for the upper and lower parameters of higher order generalized hypergeometric functions as well.

\begin{prop}\label{prop:gbp_crps_prop}
  Suppose that the predictive distribution is a \gbp distribution with finite mean $\mu$. Then given observation $y$ the CRPS of the predictive distribution is given by
    \begin{equation}\label{eq:gbp_crps}
    \begin{split}
        &CRPS(F|y) = \\
        &2\mu - y +\frac{2}{B(\alpha, \beta)}\left(yB(w;\alpha,\beta)+\frac{y}{\alpha+\beta-1} w^{\alpha-1}(1-w)^\beta\left(1-\tfo{1,\alpha+\beta-1;\alpha+p^{-1},w }\right)  \right)\\
        &-\frac{2q}{\alpha B(\alpha,\beta)^2}B(2\alpha+p^{-1},2\beta-p^{-1}) {}_3F_2(\alpha+\beta, 1, 2\alpha+p^{-1};\alpha+1,2\alpha+2\beta;1)
        \end{split}
    \end{equation}
    where $B(x;\alpha,\beta)$ is the incomplete Beta function and $w=\frac{y^p}{q^p+y^p}$.
\end{prop}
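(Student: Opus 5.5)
We start from the alternate form \eqref{eq:crps_alternate} and treat the two expectations separately. The key device is the substitution $t=\frac{(x/q)^p}{1+(x/q)^p}$, which sends a \gbp variable $X$ to a $\mathrm{Beta}(\alpha,\beta)$ variable $T$. Under it,
\[
X=q\left(\frac{t}{1-t}\right)^{1/p},\qquad F(x)=\frac{B(t;\alpha,\beta)}{B(\alpha,\beta)},\qquad \{x\le y\}=\{t\le w\}.
\]
The finite-mean assumption means $\beta p>1$. This guarantees that $B(\alpha+p^{-1},\beta-p^{-1})$ and $B(2\alpha+p^{-1},2\beta-p^{-1})$ are finite. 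It also gives $\mu=qB(\alpha+p^{-1},\beta-p^{-1})/B(\alpha,\beta)$.

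\textbf{The first term.} We write
\[
\Expect{\abs{X-y}}=\mu-y+2\Expect{(y-X)\mathbf 1_{X\le y}}=\mu-y+2yF(y)-2\Expect{X\mathbf 1_{X\le y}}.
\]
Here $F(y)=B(w;\alpha,\beta)/B(\alpha,\beta)$, which produces the $yB(w;\alpha,\beta)$ term. The truncated mean is
\[
\frac{q}{B(\alpha,\beta)}\int_0^w t^{\alpha+1/p-1}(1-t)^{\beta-1/p-1}\,\diffd t .
\]
We would express this incomplete beta function through ${}_2F_1$ using $B(w;a,b)=\frac{w^a(1-w)^b}{a}\,{}_2F_1(a+b,1;a+1;w)$. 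Then $q\,(w/(1-w))^{1/p}=y$ lets us factor out $y\,w^{\alpha}(1-w)^{\beta}$. A contiguous relation, or equivalently an integration by parts relating $B(w;\alpha+p^{-1},\beta-p^{-1})$ to $B(w;\alpha,\beta)$, should then recast it into the stated shape:
\[
\frac{y}{\alpha+\beta-1}\,w^{\alpha-1}(1-w)^{\beta}\left(1-\tfo{1,\alpha+\beta-1;\alpha+p^{-1},w}\right).
\]
The integration by parts would use $\frac{\diffd}{\diffd t}\left[t^{a}(1-t)^{b}\right]$. We would verify the exact parameters by differentiating both sides in $y$. Both vanish at $y=0$, and the derivative of the left side is the simple $y\,p_Z(y)$-type integrand, so this check suffices.

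\textbf{The second term.} We need $\Expect{X(2F(X)-1)}=2\Expect{XF(X)}-\mu$. Combined with the $\mu$ from the first term, this yields the leading $2\mu-y$ once signs are tracked. It remains to compute
\[
\Expect{XF(X)}=\frac{q}{B(\alpha,\beta)^2}\int_0^1 t^{\alpha+1/p-1}(1-t)^{\beta-1/p-1}\,B(t;\alpha,\beta)\,\diffd t .
\]
We insert $B(t;\alpha,\beta)=\frac{t^\alpha(1-t)^\beta}{\alpha}\,{}_2F_1(\alpha+\beta,1;\alpha+1;t)$ and expand the ${}_2F_1$ as a power series. Integrating term by term gives
\[
\sum_n \frac{(\alpha+\beta)_n(1)_n}{(\alpha+1)_n\,n!}\,B(2\alpha+p^{-1}+n,\,2\beta-p^{-1}).
\]
Writing $B(2\alpha+p^{-1}+n,2\beta-p^{-1})=B(2\alpha+p^{-1},2\beta-p^{-1})\frac{(2\alpha+p^{-1})_n}{(2\alpha+2\beta)_n}$ produces exactly
\[
\frac{q}{\alpha B(\alpha,\beta)^2}\,B(2\alpha+p^{-1},2\beta-p^{-1})\;{}_3F_2(\alpha+\beta,1,2\alpha+p^{-1};\alpha+1,2\alpha+2\beta;1).
\]

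\textbf{The main obstacle.} We expect two points to need care. The first is justifying term-by-term integration and convergence of the ${}_3F_2$ at argument $1$. Its parameter excess is $(\alpha+1)+(2\alpha+2\beta)-(\alpha+\beta)-1-(2\alpha+p^{-1})=\beta-p^{-1}>0$, which holds precisely under the finite-mean assumption, and monotone convergence handles the interchange since all terms are positive. The second is the bookkeeping that brings the truncated-mean piece into the exact ${}_2F_1$ form stated, with parameters $(1,\alpha+\beta-1;\alpha+p^{-1})$. If the direct contiguous-relation route is messy, we would fall back on the derivative-in-$y$ verification described above.
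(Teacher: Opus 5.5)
Your proposal is correct and follows the paper's proof essentially step for step: the same split via \eqref{eq:crps_alternate}, the Beta substitution, the ${}_2F_1$ form of the incomplete beta, and Euler's integral for the ${}_3F_2$. For the last of these you justify term-by-term integration by monotone convergence, where the paper simply cites DLMF 16.5.2. The one step you leave as ``should recast'' is handled in the paper by the contiguous relation ${}_2F_1(1,b;c;w)-1=\frac{bw}{c}\,{}_2F_1(1,b+1;c+1;w)$ with $b=\alpha+\beta-1$, $c=\alpha+p^{-1}$, which is immediate from the series. Your integration-by-parts alternative would link $B(w;\alpha+p^{-1},\beta-p^{-1})$ to $B(w;\alpha+p^{-1}-1,\beta-p^{-1})$, not to $B(w;\alpha,\beta)$, and it only makes sense when $\alpha+p^{-1}>1$; so use the contiguous relation, or your derivative check, instead.
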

\begin{proof}
    Starting from Eq. \eqref{eq:crps_alternate} we can say that 
    \begin{equation}
        \Expect{\abs{X-y}} =\mu-y+2\int_0^yF(x)\diffd x
    \end{equation}
    by splitting the integral appropriately and using integration by parts. Then making the substitution $u=\frac{x^p}{q^p+x^p}$, we get that
    \begin{equation}
    \begin{split}
         \int_0^yF(x)\diffd x &= \frac{q}{pB(\alpha \beta)}\int_0^wu^{1/p-1}(1-u)^{-1/p-1}B(u;\alpha,\beta)\diffd u \\
         & = \frac{q}{pB(\alpha,\beta)}\int_0^w\int_0^uu^{1/p-1}(1-u)^{-1/p-1}t^{\alpha-1}(1-t)^{\beta-1}\diffd t\diffd u
    \end{split}
    \end{equation}
    We can apply Fubini's Theorem to swap the order of integration, then making the substitution $s = \frac{u}{1-u}$ yields the integral
    \begin{equation}
    \begin{split}
        &\frac{q}{pB(\alpha,\beta)}\int_0^w\int_{\frac{t}{1-t}}^{\frac{w}{1-w}}s^{1/p-1}t^{\alpha-1}(1-t)^{\beta-1}\diffd s \diffd t \\
        =&\frac{q}{B(\alpha,\beta)}\left(\frac{y}{q}B(w;\alpha,\beta)-\int_0^wt^{\alpha+1/p-1}(1-t)^{\beta-1/p-1}\diffd t\right)\\
        =&\frac{q}{B(\alpha,\beta)}\left(\frac{y}{q}B(w;\alpha,\beta)-w^{\alpha+1/p-1}(1-w)^{\beta-1/p}\frac{w}{\alpha+1/p}\tfo{\alpha+\beta, 1;\alpha+1/p+1;w}\right)
        \end{split}
    \end{equation}
    This form allows us to take advantage of Eq. 15.5.16\_5 of the NIST Digital Library of Mathematical Functions (\cite{NIST:DLMF}), which says that
    \begin{equation}
        \label{dlmf15516_5}
     \tfo{a,b;c;z}-\tfo{a-1,b;c;z}-\frac{bz}{c}\tfo{a,b+1;c+1;z}=0
    \end{equation}
    This, along with the fact that hypergeometric functions are symmetric in their upper arguments, means that
    \begin{equation}
        -\frac{w}{\alpha+1/p}\tfo{\alpha+\beta, 1;\alpha+1/p+1;w} = \frac{1}{\alpha+\beta-1}\left(1-\tfo{1,\alpha+\beta-1;\alpha+1/p;w}\right)
    \end{equation}
    and so
    \begin{equation}
        \label{eq:gbp_crps_firstpart}
        \begin{split}
            \Expect{\abs{X-y}} &=\mu-y  +\frac{2}{B(\alpha,\beta)}\times  \\
            &\left(yB(w;\alpha,\beta) +yw^{\alpha-1}(1-w)^{\beta}\left(1-\tfo{1,\alpha+\beta-1;\alpha+1/p;w}\right)\right)
        \end{split}
    \end{equation}
    To finish the derivation, we must calculate $\Expect{X(2F(X)-1)}$. Since $-\Expect{X}=-\mu$, we focus on calculating $\Expect{2XF(X)}$. This is given by the integral
    \begin{equation}
        2\Expect{XF(X)}=\frac{p}{B(\alpha,\beta)^2}\intR \frac{(x/q)^{\alpha p}}{(1+(x/q)^p)^{\alpha+\beta}}B\left(\frac{(x/q)^p}{1+(x/q)^p};\alpha,\beta\right)\diffd x
    \end{equation}
    Now letting $u=\frac{(x/q)^p}{1+(x/q)^p}$ we see that this is equivalent to the integral
    \begin{equation}
        \frac{q}{B(\alpha,\beta)^2}\int_0^1u^{\alpha+1/p-1}(1-u)^{\beta-1/p-1}B(u;\alpha,\beta)\diffd u
    \end{equation}
    which can be evaluated in several ways. One way is given by Remark 3 of \cite{connor2022integrals}, but we will use an equivalent method. Via Eq. (8.17.8) of \cite{NIST:DLMF}, the integral becomes
    \begin{equation}
        \frac{q}{\alpha B(\alpha,\beta)^2}\int_0^1u^{2\alpha+1/p-1}(1-u)^{2\beta-1/p-1}\tfo{\alpha+\beta,1;\alpha+1;u}\diffd u
    \end{equation}
    which can be evaluated using Euler's Integral Transformation as (\cite{NIST:DLMF} Eq. (16.5.2))
    \begin{equation}
        \frac{q}{\alpha B(\alpha,\beta)^2}B(2\alpha+1/p,2\beta-1/p) {}_3F_2( \alpha+\beta, 1, 2\alpha+1/p;\alpha+1,2(\alpha+\beta);1 )
    \end{equation}
    By Eqs. \eqref{eq:crps_alternate} and \eqref{eq:gbp_crps_firstpart}, then, we see that 
     \begin{equation}
    \begin{split}
        &CRPS(F|y) = \\
        &2\mu - y +\frac{2}{B(\alpha, \beta)}\left(yB(w;\alpha,\beta)+\frac{y}{\alpha+\beta-1} w^{\alpha-1}(1-w)^\beta\left(1-\tfo{1,\alpha+\beta-1;\alpha+p^{-1},w }\right)  \right)\\
        &-\frac{2q}{\alpha B(\alpha,\beta)^2}B(2\alpha+p^{-1},2\beta-p^{-1}) {}_3F_2(\alpha+\beta, 1, 2\alpha+p^{-1};\alpha+1,2\alpha+2\beta;1)
        \end{split}
    \end{equation}
    as desired.
\end{proof}

In the following section, we will derive several special cases of this expression. We will start with obtaining expressions for the CRPS of the Dagum, Singh-Madalla, and log-logistic distributions, the latter of which is known from \cite{crps_taillardat}. %Then we will include a derivation of the CRPS of the \gbp distribution when $p^{-1}\in \mathbb{N}$.

\section{The CRPS of Several Special Cases of the Generalized Beta-Prime Distribution}

In this section we present derivations of the CRPS of several special cases of the generalized Beta-prime distribution as special cases of Proposition \ref{prop:gbp_crps_prop}. These can also be derived from Eq. \eqref{eq:crps_alternate} directly. We will derive scores for the Dagum and Singh-Maddala distributions, as well as provide an alternate way to verify the expression of the CRPS of a log-logistic distribution given by \cite{crps_taillardat}.%, and present a derivation for a special case of Proposition \ref{prop:gbp_crps_prop} when $1/p\in \mathbb{N}$, which contains the generalized F-distribution as a speciasl case.

\subsection{The CRPS of the Singh-Maddala Distribution}

Our first special case involves the Singh-Maddala distribution. This distribution is commonly used in economics \cite{kumar2017singh,singh2008function} and corresponds to the special case $\alpha=1$.

%The derivation from Eq. \eqref{eq:crps_alternate} is more straightforward, and is presented in Appendix \textcolor{red}{Appendix}. Here, we derive it as a special case of Proposition \ref{prop:gbp_crps_prop}.

\begin{corollary}\label{corr:sm_crps}
The CRPS of a Singh-Maddala distribution is given by
    \begin{equation}\label{eq:crps_sm}
    \begin{split}
         CRPS(F,y) &=q\frac{\Gamma\left(2\beta-\frac{1}{p}\right)\Gamma\left(1+\frac{1}{p}\right)}{\Gamma(2\beta)}+y\left(1-\frac{2q^{\beta p}}{(q^p+y^p)^\beta} {}_2F_1\left(1,
         \beta,1;1+\frac{1}{p};\frac{y^p}{y^p+q^p}\right)\right)
    \end{split}
\end{equation}
assuming that the mean is finite.
\end{corollary}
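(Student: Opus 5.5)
We obtain the corollary by setting $\alpha=1$ in Eq. \eqref{eq:gbp_crps} of Proposition \ref{prop:gbp_crps_prop} and simplifying. The $y$-dependent part and the constant part are handled separately.

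\emph{The $y$-dependent part.} With $\alpha=1$ we have $B(1,\beta)=1/\beta$ and $B(w;1,\beta)=(1-(1-w)^\beta)/\beta$. The first term in the large parentheses therefore contributes $2y\bigl(1-(1-w)^\beta\bigr)$. In the second term the prefactor $w^{\alpha-1}/(\alpha+\beta-1)$ becomes $1/\beta$, so it contributes $2y(1-w)^\beta\bigl(1-\tfo{1,\beta;1+p^{-1};w}\bigr)$. The two $2y(1-w)^\beta$ pieces cancel, leaving $2y-2y(1-w)^\beta\,\tfo{1,\beta;1+p^{-1};w}$. Adding the $-y$ from Eq. \eqref{eq:gbp_crps} and using $1-w=q^p/(q^p+y^p)$, so that $(1-w)^\beta=q^{\beta p}/(q^p+y^p)^\beta$, gives exactly the term
\[
y\left(1-\frac{2q^{\beta p}}{(q^p+y^p)^\beta}\,{}_2F_1\!\left(1,\beta;1+\tfrac1p;\tfrac{y^p}{y^p+q^p}\right)\right).
\]
I read the hypergeometric factor in Eq. \eqref{eq:crps_sm} as this ${}_2F_1$, with upper parameters $1,\beta$ and lower parameter $1+1/p$.

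\emph{The constant part.} It remains to show that
\[
2\mu-2q\beta^2B\!\left(2+\tfrac1p,2\beta-\tfrac1p\right){}_3F_2\!\left(1+\beta,1,2+\tfrac1p;2,2+2\beta;1\right)
= q\,\frac{\Gamma(2\beta-\frac1p)\Gamma(1+\frac1p)}{\Gamma(2\beta)}.
\]
Here $\mu=q\beta B(1+\frac1p,\beta-\frac1p)$, the Singh-Maddala mean. The ${}_3F_2$ has an upper $1$ and a lower $2$, so it reduces to a Gauss sum.

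\begin{itemize}
\item Write $\frac{(1)_n}{(2)_n}=\frac{1}{n+1}$ and $(a)_n=\frac{(a-1)_{n+1}}{a-1}$ (similarly for the other parameters), then shift the index $m=n+1$. This gives
\[
{}_3F_2(a,1,c;2,d;1)=\frac{d-1}{(a-1)(c-1)}\bigl({}_2F_1(a-1,c-1;d-1;1)-1\bigr).
\]
\item With $a=1+\beta$, $c=2+\frac1p$, $d=2+2\beta$, Gauss's summation theorem applies because $(d-1)-(a-1)-(c-1)=\beta-\frac1p>0$, which is the finite-mean condition. It gives
\[
{}_2F_1\!\left(\beta,1+\tfrac1p;1+2\beta;1\right)=\frac{\Gamma(1+2\beta)\Gamma(\beta-\frac1p)}{\Gamma(1+\beta)\Gamma(2\beta-\frac1p)}.
\]
\item Substitute this back, expand $B(2+\frac1p,2\beta-\frac1p)$ and $\mu$ in Gamma functions, and simplify with $\Gamma(z+1)=z\Gamma(z)$.
\end{itemize}

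In that last step, the Gauss-sum term should cancel $2\mu$ exactly. What is left, coming from the ``$-1$'', is
\[
2q\beta^2\,\frac{1+2\beta}{\beta(1+\frac1p)}\,B\!\left(2+\tfrac1p,2\beta-\tfrac1p\right).
\]
This should collapse to $q\,\Gamma(2\beta-\frac1p)\Gamma(1+\frac1p)/\Gamma(2\beta)$. I expect this Gamma-function bookkeeping, and checking that the cancellation with $2\mu$ really is exact, to be the main obstacle, so I would verify it carefully, e.g. numerically at one parameter value. If the direct reduction becomes messy, a fallback is to evaluate $2\,\Expect{XF(X)}$ directly from Eq. \eqref{eq:crps_alternate}. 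For $\alpha=1$ the CDF is elementary, $F(x)=1-(1+(x/q)^p)^{-\beta}$, so this reduces to a single Beta integral.
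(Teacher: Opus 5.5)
Your proposal is correct and follows essentially the paper's route. You specialize Eq.~\eqref{eq:gbp_crps} to $\alpha=1$, collapse the $y$-terms via $B(1,\beta)=1/\beta$, and reduce the ${}_3F_2$ to $\frac{1+2\beta}{\beta(1+1/p)}\bigl({}_2F_1(\beta,1+\frac1p;1+2\beta;1)-1\bigr)$ before applying Gauss's theorem; your Pochhammer index shift is just the series form of the paper's Euler-integral/derivative step. Your coefficient $2q\beta^2$ is the correct specialization (the paper's displayed $2q\beta$ is a slip that its next Gamma-function line silently corrects), and the bookkeeping you deferred does close: the Gauss term equals $2q\beta\,\Gamma(1+\frac1p)\Gamma(\beta-\frac1p)/\Gamma(1+\beta)=2\mu$, and the remainder equals $2q\beta\,\Gamma(1+\frac1p)\Gamma(2\beta-\frac1p)/\Gamma(1+2\beta)=q\,\Gamma(1+\frac1p)\Gamma(2\beta-\frac1p)/\Gamma(2\beta)$.
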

\begin{proof}
    Starting from Eq. \eqref{eq:gbp_crps} we can say that
   \begin{equation}
   \begin{split}
       &CRPS(F|y) =\\
       &2\mu-y + 2y\left(1-(1-w)^{\beta}+(1-w)^{\beta}(1-\tfo{1,\beta;1+p^{-1};w} \right) \\
       &-2q\beta B(2+p^{-1},2\beta-p^{-1}) {}_3F_2\left(1+\beta,1,2+p^{-1};2,2\beta+2;1\right)
       \end{split}
   \end{equation}
    by setting $\alpha=1$. To begin, we first focus on the terms that depend on $y$. This can be simplified down to 
    \begin{equation}\label{eq:sm_y_terms}
      y\left( 1-2(1-w)^\beta \tfo{1,\beta;1+p^{-1};w} \right)=y\left(1-\frac{2q^{p\beta}}{(q^p+y^p)^\beta } \tfo{1,\beta;1+p^{-1};\frac{y^p}{q^p+y^p}}  \right)
    \end{equation}
    Now we turn our attention to the terms that do not depend on $y$. These are given by 
    \begin{equation}\label{eq:sm_no_y_terms}
        2\mu-2q\beta B(2+p^{-1},2\beta-p^{-1}) {}_3F_2\left(1+\beta,1,2+p^{-1};2,2\beta+2;1\right)
    \end{equation}
    Since the mean of the Singh-Maddala distribution is given by $q\beta B(1+1/p,\beta-1/p)$ we can write this as
    \begin{equation}
        2q\beta\left( B(1+1/p,\beta-1/p)-B(2+p^{-1},2\beta-p^{-1}) {}_3F_2\left(1+\beta,1,2+p^{-1};2,2\beta+2;1\right) \right)
    \end{equation}
    Now we apply Euler's integral identity: We can write the hypergeometric term as
    \begin{equation}
    \begin{split}
{}_3F_2\left(1+\beta,1,2+p^{-1};2,2\beta+2;1\right) &=\int_0^1\tfo{1+\beta, 2+p^{-1};2\beta+2;t}\diffd t \\
&=\int_0^1 \frac{2\beta+1}{\beta(1+p^{-1})}\pderiv{}{t}\tfo{\beta,1+p^{-1};2\beta+1;t}\diffd t\\
&=\frac{2\beta+1}{\beta(1+p^{-1})} \left(\frac{ \Gamma(2\beta+1)\Gamma(\beta-p^{-1})}{\Gamma(\beta+1)\Gamma(2\beta-p^{-1})} -1\right)
\end{split}
    \end{equation}
    Plugging this back into Eq. \eqref{eq:sm_no_y_terms} and writing the Beta functions in terms of the Gamma function we see that the terms that do not depend on $y$ can be written as
    \begin{equation}
    \begin{split}
    &2q\beta\left(\frac{\Gamma(1+p^{-1})\Gamma(\beta-p^{-1})}{\Gamma(\beta+1)}-\frac{\Gamma(1+p^{-1})\Gamma(2\beta-p^{-1})}{\Gamma(2\beta+1)}\left(\frac{\Gamma(2\beta+1)\Gamma(\beta-p^{-1})}{\Gamma(\beta+1)\Gamma(2\beta-p^{-1})}-1
    \right)\right)\\
=& q\frac{\Gamma(1+p^{-1})\Gamma(2\beta-p^{-1})}{\Gamma(2\beta)}
    \end{split}
    \end{equation}
    which is exactly the first term in Corollary \ref{corr:sm_crps}. Combining this with Eq. \eqref{eq:sm_y_terms} gives the desired result.
\end{proof}

\subsection{The CRPS of the Dagum Distribution}

Our second special case involves the CRPS of the Dagum distribution. This distribution is commonly used in economics for income distribution modeling \cite{dagum2008new,kleiber2008guide} and corresponds to the special case $\beta=1$.

\begin{corollary}\label{corr:dagum_crps}
    The CRPS of a Dagum distribution is given by
    \begin{equation}
        \begin{split}
        &CRPS(F|y) = \\
        &2\mu - q\frac{\Gamma(1-p^{-1})\Gamma(2\alpha+p^{-1})}{\Gamma(2\alpha)} -y +2y \left(\frac{y^p}{q^p+y^p}\right)^{\alpha-1}\left(1-\frac{q^p}{q^p+y^p}\tfo{1,\alpha;\alpha+p^{-1},w }\right)\\
        \end{split}
    \end{equation}
    assuming that the mean is finite.
\end{corollary}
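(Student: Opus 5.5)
The plan is to specialize Proposition \ref{prop:gbp_crps_prop} to $\beta=1$ and simplify the $y$-dependent and $y$-independent parts separately, in the same way as for Corollary \ref{corr:sm_crps}. The $2\mu-y$ term carries over unchanged.

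For the $y$-dependent part I would use the elementary identities $B(\alpha,1)=1/\alpha$ and $B(w;\alpha,1)=w^\alpha/\alpha$. With these, the prefactor $\frac{2}{B(\alpha,\beta)}$ becomes $2\alpha$. The factor $\frac{1}{\alpha+\beta-1}$ becomes $1/\alpha$, and the hypergeometric function becomes $\tfo{1,\alpha;\alpha+p^{-1};w}$. The bracketed $y$-terms then read
\begin{equation*}
2\alpha\left(\frac{y w^{\alpha}}{\alpha}+\frac{y}{\alpha}w^{\alpha-1}(1-w)\left(1-\tfo{1,\alpha;\alpha+p^{-1};w}\right)\right).
\end{equation*}
Factoring out $2yw^{\alpha-1}$, the terms $w+(1-w)$ combine to $1$, which leaves
\begin{equation*}
2yw^{\alpha-1}\left(1-(1-w)\tfo{1,\alpha;\alpha+p^{-1};w}\right).
\end{equation*}
Substituting $w=\frac{y^p}{q^p+y^p}$ and $1-w=\frac{q^p}{q^p+y^p}$ gives exactly the $y$-terms in the statement.

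For the constant term, setting $\beta=1$ turns the prefactor $\frac{2q}{\alpha B(\alpha,1)^2}$ into $2q\alpha$. The Beta factor becomes $B(2\alpha+p^{-1},2-p^{-1})$. In the ${}_3F_2(\alpha+1,1,2\alpha+p^{-1};\alpha+1,2\alpha+2;1)$ the upper parameter $\alpha+1$ cancels the lower parameter $\alpha+1$, so it reduces to $\tfo{1,2\alpha+p^{-1};2\alpha+2;1}$. I would evaluate this with Gauss's summation theorem (DLMF 15.4.20):
\begin{equation*}
\tfo{1,2\alpha+p^{-1};2\alpha+2;1}=\frac{\Gamma(2\alpha+2)\Gamma(1-p^{-1})}{\Gamma(2\alpha+1)\Gamma(2-p^{-1})}.
\end{equation*}
Writing the Beta function in Gamma functions, the factors $\Gamma(2-p^{-1})$ cancel, as do $\Gamma(2\alpha+2)$ and $\Gamma(2\alpha+1)$. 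This leaves $2q\alpha\,\Gamma(2\alpha+p^{-1})\Gamma(1-p^{-1})/\Gamma(2\alpha+1)$, which equals $q\Gamma(1-p^{-1})\Gamma(2\alpha+p^{-1})/\Gamma(2\alpha)$ as claimed.

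The only delicate point is justifying Gauss's theorem. It requires $c-a-b=1-p^{-1}>0$, that is $p>1$. This is exactly the finite-mean condition $\beta p>1$ for the Dagum distribution with $\beta=1$, so the hypothesis of the corollary covers it.
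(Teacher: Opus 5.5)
Your proposal is correct and follows the paper's proof step for step. You specialize Proposition \ref{prop:gbp_crps_prop} to $\beta=1$, cancel the matching $\alpha+1$ parameters to reduce the ${}_3F_2$ to a ${}_2F_1$, evaluate it by Gauss's summation (justified by $p>1$ from the finite-mean assumption), and simplify the $y$-terms using $B(\alpha,1)=1/\alpha$, $B(w;\alpha,1)=w^\alpha/\alpha$ and $w+(1-w)=1$, which is exactly the ``further algebra'' the paper leaves implicit.
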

\begin{proof}
    Starting from Eq. \eqref{eq:gbp_crps}, we set $\beta=1$. This yields
     \begin{equation}
    \begin{split}
        &CRPS(F|y) = \\
        &2\mu - y +2\left(y w^\alpha+y w^{\alpha-1}(1-w)\left(1-\tfo{1,\alpha;\alpha+p^{-1},w }\right)  \right)\\
        &-2q\alpha B(2\alpha+p^{-1},2-p^{-1}) {}_3F_2(\alpha+1, 1, 2\alpha+p^{-1};\alpha+1,2\alpha+2;1)
        \end{split}
    \end{equation}
    Since the hypergeometric function here has an upper and lower argument that are the same we can write it as
    \begin{equation}
    \begin{split}
        &{}_3F_2(\alpha+1, 1, 2\alpha+p^{-1};\alpha+1,2\alpha+2;1) = \\ 
&\tfo{2\alpha+p^{-1},1;2\alpha+2;1}=\frac{\Gamma(2\alpha+2)\Gamma(1-p^{-1})}{\Gamma(2\alpha+1)\Gamma(2-p^{-1})}
    \end{split}
    \end{equation}
    by (\cite{NIST:DLMF} Eq. (15.4.20)), which is valid since the assumption that the mean is finite ensures $p>1$. Inserting this back into the equation and using properties of the Beta and Gamma functions yields 
    \begin{equation}
        \begin{split}
        CRPS(F|y) &= 2\mu - q\frac{\Gamma(1-p^{-1})\Gamma(2\alpha+p^{-1})}{\Gamma(2\alpha)} -y \\&+2\left(y w^\alpha+y w^{\alpha-1}(1-w)\left(1-\tfo{1,\alpha;\alpha+p^{-1},w }\right)  \right)\\
        \end{split}
    \end{equation}
    Further algebra on the second portion of the equation yields
     \begin{equation}
        \begin{split}
        CRPS(F|y) &=2\mu - q\frac{\Gamma(1-p^{-1})\Gamma(2\alpha+p^{-1})}{\Gamma(2\alpha)} -y  \\
        &+2y \left(\frac{y^p}{q^p+y^p}\right)^{\alpha-1}\left(1-\frac{q^p}{q^p+y^p}\tfo{1,\alpha;\alpha+p^{-1},w }\right)\\
        \end{split}
    \end{equation}
    as desired.
\end{proof}
\begin{comment}
\subsection{The CRPS of a Generalized Beta-Prime when $1/p\in \mathbb{N}$}
\textcolor{red}{needs editing to bring in line with other notation}
\input{body.tex}
\end{comment}
\subsection{The CRPS of the Log-Logistic Distribution}

We can also derive the CRPS of the log-logistic CRPS as presented in \cite{crps_taillardat} from these results. They present the log-logistic CRPS in Appendix B.i as (adjusting for the notation used here)
\begin{equation}
    \label{eq:loglog_crps}
    CRPS(F|y) = \left(\frac{p-1}{p^2}\right)\frac{q\pi}{\sin(\pi/p)}+y\left(1-\frac{2q^p}{q^p+y^p} \tfo{1,1;1+p^{-1};\frac{y^p}{q^p+y^p}}\right)
\end{equation}
This can be obtained from Corollary \ref{corr:dagum_crps} by setting $\alpha=1$ or Corollary \ref{corr:sm_crps} by setting $\beta=1$. We will show this using the latter. Setting $\beta=1$ yields
\begin{equation}
    CRPS(F|y) = q\Gamma(2-p^{-1})\Gamma(1+p^{-1}) + y\left(1-\frac{2q^p}{q^p+y^p} \tfo{1,1;1+p^{-1};\frac{y^p}{q^p+y^p}}\right)
\end{equation}
Using properties of the Gamma function this can be written as
\begin{equation}
\begin{split}
    CRPS(F|y) =& q\left(\frac{p-1}{p}\right)\Gamma(1-p^{-1})\Gamma(1+p^{-1}) + y\left(1-\frac{2q^p}{q^p+y^p} \tfo{1,1;1+p^{-1};\frac{y^p}{q^p+y^p}}\right)\\
    =&\left(\frac{p-1}{p^2}\right)\frac{q\pi}{\sin(\pi/p)} + y\left(1-\frac{2q^p}{q^p+y^p} \tfo{1,1;1+p^{-1};\frac{y^p}{q^p+y^p}}\right)
    \end{split}
\end{equation}
assuming that $p>1$, which is necessary for the mean to exist. The last line is derived using Euler's reflection formula (\cite{NIST:DLMF} Eq. (5.5.3)). 
\section{Numerical Validation}

As a quick numerical validation of the method, we compared the results of the exact expression for the \gbp CRPS to Monte-Carlo simulation. By leveraging the relationship 
\begin{equation}
    X\sim B(\alpha,\beta) \implies q\left(\frac{X}{1-X}\right)^{1/p}\sim BP(\alpha, \beta, p, q)
\end{equation}
we were able to generate estimates of the CRPS via numerical integration of Eq. \eqref{eq:crps_alternate}. The results for $N=4\times 10^7$ draws are compared with the analytic expressions in Table \ref{tab:numerical}. The $N^{-1/2}$ convergence of Monte-Carlo methods suggests errors on the order of $10^{-4}$, which are observed.
\begin{table}[ht]
\centering

\begin{tabular}{cccccccr}
\hline
$\alpha$ & $\beta$ & $p$ & $q$ & $y$ & Analytical & Monte Carlo & Rel.\ Error \\
\hline
1.00 & 2.00 & 1.50 & 1.00 & 1.00 & 0.253261 & 0.253137 & $4.89 \times 10^{-4}$ \\
1.00 & 2.00 & 1.50 & 1.00 & 0.50 & 0.130956 & 0.131128 & $1.31 \times 10^{-3}$ \\
1.00 & 2.00 & 1.50 & 1.00 & 2.00 & 0.982212 & 0.982037 & $1.78 \times 10^{-4}$ \\
2.00 & 3.00 & 2.00 & 1.00 & 1.00 & 0.133398 & 0.133418 & $1.50 \times 10^{-4}$ \\
1.00 & 3.00 & 2.00 & 2.00 & 1.00 & 0.157655 & 0.157644 & $6.86 \times 10^{-5}$ \\
0.50 & 2.00 & 2.00 & 1.00 & 1.00 & 0.385010 & 0.384993 & $4.36 \times 10^{-5}$ \\
1.00 & 2.00 & 3.00 & 1.00 & 1.00 & 0.149604 & 0.149656 & $3.47 \times 10^{-4}$ \\
1.00 & 2.00 & 2.00 & 1.00 & 1.00 & 0.205476 & 0.205501 & $1.24 \times 10^{-4}$ \\
1.00 & 3.00 & 1.50 & 1.00 & 1.00 & 0.358729 & 0.358642 & $2.43 \times 10^{-4}$ \\
1.00 & 2.00 & 3.14 & 1.00 & 1.00 & 0.144072 & 0.144098 & $1.81 \times 10^{-4}$ \\
2.00 & 1.00 & 2.00 & 1.00 & 1.00 & 0.420078 & 0.420864 & $1.87 \times 10^{-3}$ \\
3.00 & 1.00 & 1.50 & 1.00 & 1.00 & 1.131873 & 1.132837 & $8.51 \times 10^{-4}$ \\
3.00 & 1.00 & 3.14 & 1.00 & 1.00 & 0.363000 & 0.362612 & $1.07 \times 10^{-3}$ \\
2.00 & 2.00 & 1.00 & 1.00 & 2.00 & 0.577778 & 0.577752 & $4.48 \times 10^{-5}$ \\
2.00 & 1.50 & 1.00 & 3.00 & 1.00 & 2.646148 & 2.644338 & $6.84 \times 10^{-4}$ \\
\hline
\end{tabular}
\caption{Numerical verification of the GBP CRPS formula against Monte Carlo estimation ($N = 4\times10^7$ samples).}
\label{tab:numerical}
\end{table}

%\section{Application to Space Weather Data}

%\cite{mine_permproc,cantrallmatsuo}

\section{Conclusion}

This short note, which will be expanded upon in the future, presents a derivation of the CRPS of a generalized Beta-prime distribution, along with the CRPS of the Dagum and Singh-Maddala distributions. This work shows that the derivbed expressions reduce to the CRPS of a log-logistic distribution under appropriate restrictions, and demonstrated a quick numerical verification of the results. Future work, which involves derivation of other special cases and applications to remote sensing and economics, is forthcoming.

\newpage
\bibliography{refs.bib, refs_crps.bib}
\end{document}